\newcommand{\ket}[1]{\left\vert{#1}\right\rangle}
\newcommand{\qw}[1][-1]{\ar @{-} [0,#1]}
\newcommand{\qwx}[1][-1]{\ar @{-} [#1,0]}
\newcommand{\gate}[1]{*{\xy *+<.6em>{#1};p\save+LU;+RU **\dir{-}\restore\save+RU;+RD **\dir{-}\restore\save+RD;+LD **\dir{-}\restore\POS+LD;+LU **\dir{-}\endxy} \qw}
\newcommand{\meter}{\gate{\xy *!<0em,1.1em>h\cir<1.1em>{ur_dr},!U-<0em,.4em>;p+<.5em,.9em> **h\dir{-} \POS <-.6em,.4em> *{},<.6em,-.4em> *{} \endxy}}
\newcommand{\control}{*-=-{\bullet}}
\newcommand{\ctrl}[1]{\control \qwx[#1] \qw}
\newcommand{\multigate}[2]{*+<1em,.9em>{\hphantom{#2}} \qw \POS[0,0].[#1,0];p !C *{#2},p \save+LU;+RU **\dir{-}\restore\save+RU;+RD **\dir{-}\restore\save+RD;+LD **\dir{-}\restore\save+LD;+LU **\dir{-}\restore}
\newcommand{\ghost}[1]{*+<1em,.9em>{\hphantom{#1}} \qw}
\newcommand{\lstick}[1]{*!R!<.5em,0em>=<0em>{#1}}
\newcommand{\Qcircuit}{\xymatrix @*=<0em>}
\newcommand{\polylog}{\operatorname{polylog}}
\begin{document}

\title{Quantum Online Memory Checking\thanks{This article has appeared as: ``Quantum Online Memory Checking'', 
Wim van Dam and Qingqing Yuan. In \emph{Theory of Quantum Computation, Communication, and Cryptography: Fourth Workshop, TQC 2009, Waterloo, Canada, May 11--13, 2009. Revised Selected Papers,} eds.\ Andrew Childs and Michele Mosca, Lecture Notes in Computer Science, Volume 5906, Springer, pages 10--19 (2009)}}
\author{Wim van Dam\inst{1,2}
\and
Qingqing Yuan\inst{1,3}}
\institute{Department of Computer Science, University of California, Santa Barbara \and
Department of Physics, University of California, Santa Barbara
\and 
Microsoft Corporation, Redmond, WA}
\maketitle

\begin{abstract}
  The problem of memory checking considers storing files on an
  unreliable public server whose memory can be modified by a
  malicious party. The main task is to design an online memory checker
  with the capability to verify that the information on the server has
  not been corrupted. To store $n$ bits of public information, the
  memory checker has $s$ private reliable bits for verification
  purpose; while to retrieve each bit of public information the
  checker communicates $t$ bits with the public memory. Earlier work
  showed that, for classical memory checkers, the lower bound $s\times
  t\in\Omega(n)$ holds. In this article we study \emph{quantum} memory
  checkers that have $s$ private qubits and that are allowed to
  quantum query the public memory using $t$ qubits. We prove an
  exponential improvement over the classical setting by showing the
  existence of a quantum checker that, using quantum fingerprints,
  requires only $s\in O(\log n)$ qubits of local memory and $t\in
  O(\polylog n)$ qubits of communication with the public memory.
\end{abstract}

\section{Introduction}\label{sec:intro}
The problem of memory checking was first introduced by Blum et al.\ \cite{BlumEvansGemmell:1994} as an extension of program checking. In this
problem, a memory checker receives a sequence of ``store'' and
``retrieve'' operations from a user, and the checker has to relay
these commands to an unreliable server. By making additional
requests to the unreliable memory and using a small private and
reliable memory for storing additional information, the checker is
required to give correct answers (with high probability) to the
user's retrieve operations that are in accordance the previous
store instructions, or report error when the information has been
corrupted. Blum et al.\ \cite{BlumEvansGemmell:1994} made a distinction
between ``online'' and ``offline'' memory checkers: an
\emph{online memory checker} must detect the error immediately after receiving an
errant response from the memory, while an \emph{offline checker}
is allowed to output whether they are all handled correctly until
the end of the operation sequence.

There are two main complexity measures regarding memory checkers: the
\emph{space complexity}, which is the size of its private reliable
memory, and the \emph{query complexity}, which is the size of the
messages between the memory checker and the public memory per user
request. The goal is to have a reliable checker with low space
complexity and low query complexity against any (probabilistic,
polynomial time) adversary corrupting the public memory.

With $s$ be the space complexity and $t$  the query complexity of
an online memory checker, both Blum et al.\ \cite{BlumEvansGemmell:1994}
and Naor and Rothblum~\cite{NaorRothblum:2009} proved that for classical online
memory checking one has the lower bound $s\times t\in\Omega(n)$.

\subsubsection{Our Result}

We looked at the efficiency of online memory checkers that are allowed to operate in a quantum mechanical way. After defining the proper model, we present an online memory checker using quantum fingerprints that requires only $s\in O(\log n)$ bits of private memory and $t\in O(\polylog n)$ queries to the public memory. We also prove its correctness and security. Specifically we show that for an error rate $\epsilon>0$ it is sufficient for the memory checker to privately keep $O(\log(1/\epsilon))$ copies of the quantum fingerprints of the public memory (each requiring $O(\log n)$ qubits). The parameters of the specific error correcting code that we use for the quantum fingerprints introduces a constant multiplicative term in this quantity $O(\log (1/\epsilon))$.

\subsubsection{Acknowledgment}
This material is based upon work supported by the National Science Foundation under Grant No.\ 0729172 (``Quantum Algorithms for Data Streams'').

\section{Preliminaries}\label{sec:prelim}
In this section, we present the model of memory checker in the
quantum settings. We also briefly review some of the techniques
used in our quantum algorithms for online memory checking.
\subsection{Memory Checker}
We first introduce the classical definition of memory checker, and then extend it to quantum settings.

\begin{definition} \textbf{Classical Memory Checkers (see
   \cite{BlumEvansGemmell:1994,NaorRothblum:2009}).}  A memory checker is a probabilistic Turing
 machine $\mathcal{C}$ with five tapes: a read-only input tape for
 $\mathcal{C}$ to read the requests from user $\mathcal{U}$, a
 write-only output tape for $\mathcal{C}$ to write its response to
 the user's requests or that the memory $\mathcal{M}$ is
 ``{buggy}'', a write-only tape for $\mathcal{C}$ to write
 requests to $\mathcal{M}$, a read-only tape for $\mathcal{C}$
 reading response from $\mathcal{M}$, and a read-write work tape as a
 secret, reliable memory.
\end{definition}
\paragraph{\textbf{Quantum Memory checker:}}
In our quantum mechanical extension of this definition, the input
and output tape between $\mathcal{C}$ and $\mathcal{U}$ both
remain classical, as well as the memory $\mathcal{M}$.  The
checker $\mathcal{C}$, however, is now allowed to make quantum
queries to the memory $\mathcal{M}$ and the secret work-tape of
$\mathcal{C}$ and the two read and write-only tapes between
$\mathcal{C}$ and $\mathcal{M}$ now support quantum bits.  This
model is illustrated in Fig.~\ref{checker}.
\begin{figure}[htb]
\includegraphics[height = 5.5cm]{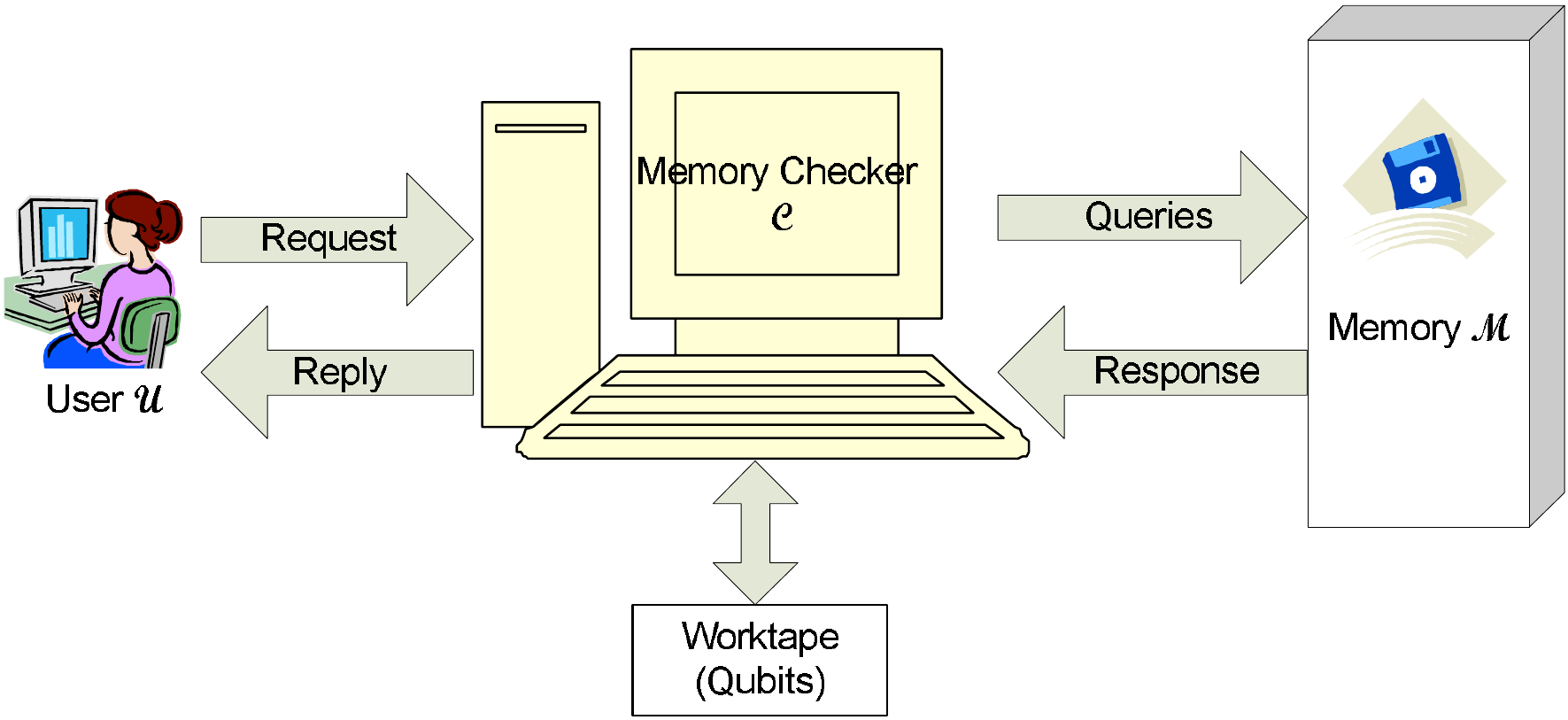} 
\caption{A quantum mechanical memory checker: The user presents
  classical ``store'' or ``retrieve'' request to the checker, which,
  with high probability, returns the correct answer or reports
  ``buggy'' when the memory has been corrupted. The checker can make quantum queries to the
  memory, such that it acquires a superposition of values.  In
  addition, the checker is also allowed to have a private, secure
  work tape that consists of qubits and that is much smaller than
  the public memory. } \label{checker}
\end{figure}

The user $\mathcal{U}$ presents the ``store'' and ``retrieve''
requests to $\mathcal{C}$ and after each ``retrieve'' request,
$\mathcal{C}$ must write an answer to the output tape or output
that $\mathcal{M}$ is ``buggy'' if the public memory
$\mathcal{M}$ has been corrupted. We say a memory \emph{acts
correctly} if the returns of a ``retrieve'' operation are
consistent with the contents written by the previous adjacent
``store'' operation. For any operation sequence of polynomial
length in the total size $n$ of the data stored by $\mathcal{U}$
on $\mathcal{M}$ and error rate $0<\epsilon<\frac{1}{2}$, it is
required that:
\begin{itemize}
\item If $\mathcal{M}$'s output to the ``retrieve'' operation is
correct, $\mathcal{C}$ also answers $\mathcal{U}$'s request with
correctness probability at least $1-\epsilon$. 
\item If $\mathcal{M}$'s output is incorrect for some operation,
  $\mathcal{C}$ outputs ``buggy'' with probability at least
  $1-\epsilon$.
\end{itemize}

There are two important measures of the complexity of a memory
checker: the size $s$ of its secret memory (the \emph{space
 complexity}) and the number $t$ of bits exchanged between
$\mathcal{C}$ and $\mathcal{M}$ per request from the user (the
\emph{query complexity}). We follow the convention that we only
consider the \emph{query complexity} for \emph{retrieve requests}
such that the query complexity for \emph{store requests} may be
unbounded. Obviously, if the secret memory is sufficiently large,
the solution to this problem is trivial as $\mathcal{C}$ can
simply store the $n$ bits on its work-tape. More interesting is
the case where the space complexity $t$ is sublinear (typically
logarithmic) in $n$.

As noted in \cite{BlumEvansGemmell:1994} and~\cite{NaorRothblum:2009}, memory checkers can be
categorized into ``online'' and ``offline'' versions. In the
offline model, the checker $\mathcal{C}$ is allowed to output
``buggy'' at any point before the last ``retrieve''
request in the sequence if $\mathcal{M}$'s answers to some request
is incorrect. The online model is more restricted as $\mathcal{C}$
is required to detect the error immediately once $\mathcal{M}$
gives an incorrect answer to the request.

In this paper, we focus on online memory checkers. As noted in the
Introduction, it is known that for classical online memory
checkers, we have the lower bound $s\times
t\in\Omega(n)$~\cite{NaorRothblum:2009}. Below it will be shown that with
quantum memory checkers one can get an exponential reduction on
this lower bound.

\subsection{Quantum Simultaneous Message Protocol}\label{finger}
Buhrman et al.\ \cite{BuhrmanCleveWatrous:2001} extended the classical
simultaneous message (SM) model~\cite{Yao:1979,NewmanSzegedy:1996} to the quantum
setting. In this model there are three players: Alice has a
bit-string $x$, Bob has another bit-string $y$, and they do not
share entanglement or randomness, but they each send one quantum
message to a referee, Carol, who tries to compute the function
value $f(x,y)$. The complexity measure of this protocol is the
number of qubits used in the messages. Classically, for the
\textsc{Equality} function (Carol's output has to be $f=1$ if
$x=y$, and $f=0$ otherwise), Newman and Szegedy~\cite{NewmanSzegedy:1996} showed
that the randomized SM complexity of the \textsc{Equality}
function on $\{0,1\}^n$ has the lower bound $\Omega(\sqrt n)$.
Buhrman et al. presented a quantum protocol for the
\textsc{Equality} function that enabled the referee to compute
$f(x,y)$ by comparing the two ``quantum fingerprints''
$\ket{\psi_x}$ and $\ket{\psi_y}$ of $x$ and $y$ sent by Alice and
Bob, respectively. The communication complexity of this protocol
is $O(\polylog n)$ qubits.

The protocol works as following: for $x, y\in \{0,1\}^n$ we use an
error correcting code $E:\{0,1\}^n\rightarrow\{0,1\}^m$ with
$m=cn$. The Hamming distance between two distinct codewords $E(x)$
and $E(y)$ (with $x\neq y$) is at least $\delta m$, with
$\delta>0$ a constant. Let $E_i(x)$ denote the $i^{th}$ bit of
$E(x)$. Alice constructs the superposition
\begin{equation*}
\ket{\psi_x}=\frac{1}{\sqrt m}\sum_{i=1}^m(-1)^{E_i(x)}\ket{i}
\end{equation*}
as the fingerprint of her input $x$. Similarly Bob construct
$\ket{\psi_y}$ for his input $y$ and both of them send the
fingerprints (of size $\log m = O(\log n)$ qubits) to Carol. Carol
performs the ``Controlled-SWAP test'' shown in the following
circuit:
\[ \Qcircuit @C=1em @R=.7em {
 \lstick{\ket{0}} & \gate{H} & \ctrl{1} & \gate{H} & \meter  \\
 \lstick{\ket{\psi_x}} & \qw & \multigate{1}{\mathrm{SWAP}} & \qw & \qw \\
 \lstick{\ket{\psi_y}} & \qw & \ghost{\mathrm{SWAP}} & \qw & \qw  \\
  }
\]
If the measurement of the first register is $0$, Carol decides
that $x=y$; otherwise she concludes $x\neq y$. It is easy to show
that the probability of Carol measuring ``0'' equals
$\frac{1}{2}+\frac{1}{2}|\langle\psi_x|\psi_y\rangle|^2$ and the
probability of measuring ``1'' is
$\frac{1}{2}-\frac{1}{2}|\langle\psi_x|\psi_y\rangle|^2$.
Therefore, when $x=y$ the probability of Carol measuring ``0'' is
$1$, while when $x\neq y$ the probability of measuring ``$0$'' is
at most $\frac{1}{2}+\frac{1}{2}|\langle\psi_x|\psi_y\rangle|^2$.
If we perform this test repeatedly for $k$ copies of
$\ket{\psi_x}$ and $\ket{\psi_y}$ with $x\neq y$, the probability
of measuring all zeros is
$(\frac{1+|\langle\psi_x|\psi_y\rangle|^2}{2})^k$, which decays
exponentially in $k$.

\subsection{Locally Decodable Codes}
To construct the quantum fingerprints, we first encode the string
using error correcting codes. In this paper, we use a locally
decodable codes (see for example Katz and Trevisan~\cite{KatzTrevisan:2000}) such
that a single bit $x_j$ of the original data can be probabilistically
reconstructed by reading only a small number of locations in the
encoding $E(x)$. Formally speaking~\cite{KatzTrevisan:2000}, for fixed $\delta$,
$\epsilon>0$ and integer $q$ we say that $E:
\{0,1\}^n\rightarrow\{0,1\}^m$ is a
$(q,\delta,\epsilon)$-\emph{locally decodable code} (LDC) if there
exists a probabilistic algorithm that reads at most $q$ bits of $E(x)$
to determine one of the bits of $x_j$ and if that same algorithm
returns the correct value with probability at least $1/2+\epsilon$ on
all strings $y\in\{0,1\}^m$ with Hamming distance $d(y,E(x))\leq
\delta m$.

It is not important for us to choose a perfect LDC in our memory
checker. In our algorithm, considering the fact that it takes too
much time if it starts from the original string to construct the
quantum fingerprints, we encode the string and store its codeword
on the public memory to speed up the processing. On the other
hand, if we use any other error correcting code where decoding
requires to query the whole codeword, it takes too much time for
the user to retrieve a bit. This leads us to use LDCs. 
For our purposes it will be sufficient to use the construction of 
Babai et al.~\cite{BabaiFortnowLevin:1991}, who constructed an LDC with 
$q\in\polylog(n)$ queries and $m\in O(n^2)$ for fixed $\delta$
and $\epsilon$.

\section{Quantum Algorithm for Online Memory
Checking}\label{sec:alg}

In this section, we state the main theorem of this paper.

\begin{theorem}
  For any error rate $\epsilon>0$, there exists a quantum online
  memory checker with space complexity $s\in O(\log(1/\epsilon)\log
  n)$ and query complexity $t\in O(\log(1/\epsilon)\log n+\polylog n)$, where $n$
  is the size of the public bitstring. This checker answers the user
  correctly with constant probability at least $1-\epsilon$ when the memory
  $\mathcal{M}$ acts correctly, and it replies ``buggy'' with
  probability at least $1-\epsilon$ when $\mathcal{M}$ has been
  corrupted.\label{maintheorem}
\end{theorem}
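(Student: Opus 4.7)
The plan is to combine the Babai et al.\ locally decodable code with the quantum fingerprint and controlled-SWAP test primitives reviewed above. On a store request, which has unbounded query complexity, I would have $\mathcal{C}$ write the codeword $E(x)$ to the public memory $\mathcal{M}$ and refresh its private register to contain $k = \Theta(\log(1/\epsilon))$ copies of the trusted fingerprint $\ket{\psi_{E(x)}}$. Each copy occupies $\lceil \log m \rceil = O(\log n)$ qubits, giving the claimed space bound $s \in O(\log(1/\epsilon)\log n)$. Since store operations can recompute $E(x)$ directly from the user's input, no subtlety arises in refreshing the private fingerprints.

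For a retrieve of bit $x_j$ against a current public string $y$, the protocol I would analyse is: (i) for each $i \le k$, make one quantum query of $O(\log n)$ qubits against $\mathcal{M}$ to prepare a fresh fingerprint $\ket{\psi_y^{(i)}}$ by phase kickback from the oracle; (ii) run the controlled-SWAP test between $\ket{\psi_y^{(i)}}$ and the $i$-th stored copy of $\ket{\psi_{E(x)}}$, aborting with ``buggy'' if any test outputs $1$; (iii) if all tests return $0$, run the LDC decoder with $\polylog n$ classical queries and output the recovered bit. The query cost is $k \cdot O(\log n) + \polylog n = O(\log(1/\epsilon)\log n + \polylog n)$, matching the target. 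A useful side-effect of the SWAP test is that, when both inputs are identical and the outcome is $0$, the two fingerprints are left unchanged, so the private copies survive intact for the next retrieve as long as the memory is still honest.

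The correctness analysis splits on the Hamming distance $d := d(y, E(x))$. If $y = E(x)$, every SWAP test deterministically outputs $0$ and the LDC decodes $x_j$ correctly, so the user receives the right answer. If $0 < d \le \delta m$, the LDC (amplified if needed) still returns $x_j$ correctly with probability $\ge 1-\epsilon$, so even a corruption small enough to slip past the fingerprints does not harm the retrieved bit. If $d > \delta m$, the fingerprint overlap satisfies $|\langle \psi_{E(x)} | \psi_y \rangle|^2 \le (1-2\delta)^2$, so each SWAP test returns $0$ with probability at most $(1+(1-2\delta)^2)/2$, and the probability that all $k = \Theta(\log(1/\epsilon))$ independent tests miss the corruption is below $\epsilon$.

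The main obstacle I would have to confront is the overlap bound in the last case: with the phase fingerprint $\ket{\psi_x} = \frac{1}{\sqrt m}\sum_i (-1)^{E_i(x)} \ket{i}$ one has $\langle \psi_{E(x)} | \psi_y \rangle = 1 - 2d/m$, which is close to $-1$ when $d$ is close to $m$ (for example $y = \overline{E(x)}$ gives overlap $-1$ and is invisible to the SWAP test). I would address this either by replacing the phase fingerprint with the amplitude form $\ket{\psi_x} = \frac{1}{\sqrt m}\sum_i \ket{i}\ket{E_i(x)}$, whose overlap is the non-negative $1 - d/m$ and hence is at most $1 - \delta$ whenever $d > \delta m$, or by composing the LDC with a constant-weight inner code so codewords and their complements are kept apart. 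Everything remaining is a union bound over per-retrieve failure events together with standard amplification of the base LDC so its decoding error drops below $\epsilon$.
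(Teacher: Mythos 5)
Your outline matches the paper's construction at a high level (LDC $E$ stored on $\mathcal{M}$, $k=\Theta(\log(1/\epsilon))$ private fingerprints, controlled-SWAP test per retrieve, then LDC-decode), and your complexity accounting is correct. You even spot a real subtlety that the paper glosses over: with the phase fingerprint, $\langle\psi_{E(x)}|\psi_y\rangle = 1-2d/m$ can be close to $-1$ (e.g.\ $y=\overline{E(x)}$), so the bound $|\langle\psi_x|\psi_y\rangle|\le 1-2\delta$ used in the paper's Lemma~1 silently assumes the relative distance is also bounded \emph{above}; your amplitude fingerprint or balanced inner code is a legitimate fix.

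However, there is a genuine gap, and it is precisely the part of the proof the paper spends the most effort on. Your analysis only treats a \emph{single} retrieve against a fixed $y$. An online memory checker faces a \emph{sequence} of retrieves, and the adversary may flip a handful of bits of $\mathcal{M}$ between consecutive retrieves. You claim the private fingerprints ``survive intact for the next retrieve as long as the memory is still honest,'' but that is exactly the regime you don't need to worry about. In the adversarial regime the SWAP test projects the pair onto the symmetric subspace, so whenever $\ket{\psi_y}\ne\ket{\psi_{E(x)}}$ a passing test leaves your private copy entangled with the discarded fresh copy, and after many such slightly-corrupted retrieves your reference state is an unanalyzed mixed state. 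You neither (a) refresh the fingerprints, as the paper does, nor (b) account for this degradation. The paper avoids the degradation problem by explicitly replacing the stored fingerprints with $k$ fresh summaries of the \emph{current} memory after every retrieve; the price of that choice is that the reference now drifts with the corruption, and the paper then has to prove the ``incremental attack'' Lemma~\ref{Tone}: if the adversary spreads a total corruption $\Delta = \sum_i \Delta_i \ge \delta$ across $T$ steps, the acceptance probability satisfies $P_T(\Delta_1,\dots,\Delta_T)=\prod_i(1-2\Delta_i+2\Delta_i^2)\le P_1(\Delta)$, so spreading out the flips cannot help the adversary. That multi-step lemma is the heart of the security argument, and your proposal contains no counterpart to it. Until you either prove a version of Lemma~\ref{Tone} for the refresh-based protocol, or prove a degradation bound for your never-refresh protocol, the security claim for a polynomial-length operation sequence is unjustified.

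Two smaller remarks. First, if you do adopt the paper's refresh strategy, your ``$0<d\le\delta m$'' branch needs a tweak: the LDC still decodes $x_j$ there, but the private reference will then lock onto $y\ne E(x)$, so you must reason about distance from the \emph{current} reference rather than from $E(x)$. Second, the subtlety you raised about overlaps near $-1$ also affects the paper's Lemma~\ref{Tone} step ``$P_T\le P_1(\delta)$'': since $P_1(\Delta)=1-2\Delta+2\Delta^2$ is a parabola with minimum at $\Delta=1/2$, $P_1(\Delta)\le P_1(\delta)$ requires $\Delta\le 1-\delta$, which again needs an upper bound on the number of flipped bits or a code/fingerprint choice (such as yours) that rules out large-$\Delta$ escapes.
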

We prove Theorem~\ref{maintheorem} by presenting a quantum online
memory checker with the claimed upper bounds on the space
complexity $s$ and query complexity $t$ of the checker.

\subsection{Online Memory Checking Using Quantum Fingerprints} 
The proposed quantum memory checker $\mathcal{C}$ uses the following ingredients. 
Let $x=x_1\dots x_n$ be the string that the user $\mathcal{U}$ wants
to write to the public memory $\mathcal{M}$.  
\begin{description}
\item[Public:] The memory checker $\mathcal{C}$ uses a $q$-query
  locally decodable code $E:\{0,1\}^n\rightarrow\{0,1\}^m$ and writes
  the codeword $E(x)\in\{0,1\}^m$ to the public memory $\mathcal{M}$.
\item[Private:] The memory checker maintains $k$ copies of the quantum
  fingerprint
\begin{equation*}
\ket{\psi_x} := \frac{1}{\sqrt{m}}\sum_{j=1}^m(-1)^{E(x)_j}\ket{j}
\end{equation*}
of $x$ in its private memory (the value of $k$ will be determined
later).
\end{description}
Every time a ``retrieve'' instruction is executed, the memory checker
obtains $k$ summary states $\ket{y}$ of the current state of
the public memory $\mathcal{M}$. By comparing these new quantum
fingerprints with those in the checker's private memory, the checker
can detect any malicious changes that would corrupt the decoding of
$E(x)$ to the public memory with high probability. Specifically, the
checker uses the following two protocols.
\subsubsection{Retrieve ($x_i$) protocol:}
\begin{itemize}
\item When a ``retrieve'' request is issued by the user the memory
 checker queries the public memory to obtain $k$ ``summary states''
\begin{equation*}
\ket{y} = \frac{1}{\sqrt{m}}\sum_j (-1)^{y_j}\ket{j}.
\end{equation*} 
\item The checker performs the Controlled-SWAP test on the $k$ copies
  of $\ket{y}$ and $\ket{\psi_x}$ as defined in
  Section~\ref{finger}. 
\item If any of the $k$ measurement outputs $1$, the checker replies
  ``buggy''.
\item Otherwise, the checker runs the decoding algorithm of the
  locally decodable code $E$ to reconstruct the bit $x_j$ the user
  requests (which requires $q$ queries to the public
  memory) and returns this bit to the user.
\item The checker then replaces the $\ket{\psi_x}$ fingerprints in
  its local memory with $k$ new summaries $\ket{y}$ of the public
  memory.
\end{itemize}

\subsubsection{Store ($x$) Protocol:}
\begin{itemize}
\item When a ``store'' request is issued, the checker first queries
  the public memory as in the first $3$ steps of the previous protocol
  to verify that the public memory and private fingerprints coincide
  with each other.
\item
The checker computes the codeword $E(x)$ for the new
input and writes it to the memory. 
\item It also computes new fingerprint $\ket{\psi_x}$ and stores $k$
  copies into its private memory.
\end{itemize}
The complexity measure of this protocol is as follows. For simplicity,
we assume here the sub-optimal parameters of the LDC of Babai et
al.~\cite{BabaiFortnowLevin:1991} with $q\in\polylog n$ and $m\in O(n^2)$.  The space
complexity is the private memory holding the fingerprints of $x$,
which is $O(k\log n)$ qubits; the query complexity is the number of
qubits answered by $\mathcal{M}$ per request, which includes the $k$
copies of the fingerprints and the queries of LDC; this amounts to 
$O(k\log n+\polylog n)$ qubits.

\subsection{Correctness of the Quantum Online Memory Checker}
Based on the definition of online memory checker in
Section~\ref{sec:prelim}, a correct checker should answer the user
correctly when the public memory $\mathcal{M}$ is correct with
probability at least $1-\epsilon$; and the checker should detect
the error when $\mathcal{M}$'s output is incorrect with
probability also at least $1-\epsilon$, such that
$0<\epsilon<\frac{1}{2}$ is the error rate of the protocol.  Let
us examine the behavior of our quantum online memory checker.
\begin{itemize}
\item When $\mathcal{M}$ is uncorrupted, i.e.\ when $y=E(x)$, we
have $|\langle\psi_x|y\rangle|=1$ and the probability of measuring
$0$ after the Controlled-SWAP test is $1$.  Hence the checker will
output the correct answer in this case.

\item When $\mathcal{M}$ has been changed by the adversary, i.e.\
when $y\neq E(x)$, Lemma~\ref{lemma_one} and Lemma~\ref{Tone}
applies.
\end{itemize}

\begin{lemma}
  Assume a memory checker uses error correcting codes of length
  $m$ with Hamming distance between two distinct codeword being at
  least $\delta m$ (where $\delta>0$ is a constant). With
  $k = \big\lceil\frac{\log\epsilon}{\log(1-2\delta+2\delta^2)}\big\rceil$
  copies of the fingerprint $\ket{\psi_x}$, the checker will detect
  the difference between the two fingerprints $\ket{\psi_{\tilde{x}}}$
  and $\ket{\psi_x}$ with probability at least
  $1-\epsilon$.\label{lemma_one}
\end{lemma}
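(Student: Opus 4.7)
The plan is to reduce the lemma to a direct computation of the overlap between two fingerprint states, then invoke the SWAP test analysis from Section~\ref{finger} and optimize the number of copies $k$.

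First I would compute $\langle \psi_x | \psi_{\tilde{x}} \rangle$ explicitly. By definition of the fingerprints, this inner product equals $\frac{1}{m}\sum_{j=1}^{m}(-1)^{E_j(x)+E_j(\tilde{x})}$, which simplifies to $1 - 2d/m$ where $d$ is the Hamming distance between $E(x)$ and $E(\tilde{x})$. Since the code has minimum distance at least $\delta m$ and $x \neq \tilde{x}$, we get $d \geq \delta m$, hence $|\langle \psi_x | \psi_{\tilde{x}} \rangle| \leq 1 - 2\delta$. (Note $\delta \leq 1/2$ so this is nonnegative.)

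Next, I would substitute this bound into the success probability of a single Controlled-SWAP test computed in Section~\ref{finger}. The probability that a single test outputs $0$ on the pair $(\ket{\psi_x},\ket{\psi_{\tilde{x}}})$ equals $\tfrac{1}{2} + \tfrac{1}{2}|\langle \psi_x | \psi_{\tilde{x}} \rangle|^2$, which is at most $\tfrac{1}{2} + \tfrac{1}{2}(1-2\delta)^2 = 1 - 2\delta + 2\delta^2$. Since the $k$ copies are independent, the probability that \emph{all} $k$ Controlled-SWAP tests output $0$ (the only event in which the checker fails to report ``buggy'') is at most $(1 - 2\delta + 2\delta^2)^k$.

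Finally, I would solve for the smallest $k$ making this quantity at most $\epsilon$. Because $0 < 1 - 2\delta + 2\delta^2 < 1$, taking logarithms (both sides negative, so the inequality flips upon division) gives $k \geq \log\epsilon / \log(1 - 2\delta + 2\delta^2)$, and rounding up yields the value stated in the lemma. I do not expect a serious obstacle here: the only subtlety is making sure the overlap bound holds in absolute value (so the squared quantity is correctly bounded) and that the direction of the inequality is preserved when dividing by the negative logarithm; both are routine once the explicit formula for $\langle \psi_x | \psi_{\tilde{x}} \rangle$ is in hand.
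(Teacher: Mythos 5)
Your proposal matches the paper's proof essentially step for step: bound $|\langle\psi_x|\psi_{\tilde{x}}\rangle|\leq 1-2\delta$ from the code's minimum distance, substitute into the Controlled-SWAP acceptance probability, raise to the $k$-th power for independent copies, and solve $(1-2\delta+2\delta^2)^k\leq\epsilon$ for $k$. The only difference is that you make explicit the computation $\langle\psi_x|\psi_{\tilde{x}}\rangle=1-2d/m$, which the paper states without derivation.
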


\begin{proof}
Since we are using error correcting code where two distinct
codewords have Hamming distance at least $\delta m$, at least
$\delta m$ bits of the public memory have been changed. Hence for
two distinct codeword $E(x)$ and $E(\tilde{x})$,
$|\langle\psi_x|\psi_{\tilde{x}}\rangle|\leq 1-2\delta$.
Therefore, for $k$ copies, we measure all zeros with probability
at most
\begin{equation*}
\Big(\frac{1+|\langle\psi_x|\psi_{\tilde{x}}\rangle|^2}{2}\Big)^k
\leq (1-2\delta+2\delta^2)^k.\label{kcopies}
\end{equation*}
In order for the checker to detect the error of the memory with
probability at least $1-\epsilon$, the above equation should have
a value less than $\epsilon$. Therefore, if we pick
$k\geq\big\lceil\frac{\log\epsilon}{\log(1-2\delta+2\delta^2)}\big\rceil$,
the checker will output ``buggy'' with probability at
least $(1-\epsilon)$ when $\mathcal{M}$ is corrupted.
\end{proof}

Lemma~\ref{lemma_one} only deals with the situation where the
codeword is changed to another codeword. There remains one problem
though. The adversary can change a few bits of $\mathcal{M}$ in
small steps such that at no point there will be big difference
between the summary of the public memory and the private
fingerprints of the checker.  But after a sequence of such
changes, the codeword can eventually be changed into another
$E(\tilde{x})$ with $\tilde{x}\neq x$. In this situation, we have
to determine if it possible for the checker to detect the attack
with high probability. Let us formalize this situation.

\paragraph{Problem of incremental changes of public memory:} The
adversary changes a codeword $E(x)$ into another legal codeword
$E(\tilde{x})$ with $x\neq \tilde{x}$ in $T$ steps: in each step,
the adversary flips $d_i$ bits of the public memory $(1\leq i\leq
T)$, so that at step $T$, it will be changed into another
codeword, i.e.\ $\sum_{i=1}^Td_i\geq \delta m$. Without loss of
generality, we assume that in each step the adversary changes
different bits, so that once a bit is flipped in one step, it will
not be flipped back in the following steps. The problem we are
interested in is what the probability is for the checker to detect
such an attack.

In each step, the probability for the checker to accept the
response from $\mathcal{M}$ is at most
$\frac{1}{2}+\frac{1}{2}\big(|\langle\psi_x|\psi_y\rangle|^2\big)=\frac{1}{2}+\frac{1}{2}\big(1-\frac{2d_i}{m}\big)^2$.
Define $\Delta_i:=\frac{d_i}{m}$ such that
$\Delta=\sum_{i=1}^T\Delta_i\geq \delta$. Therefore, the
probability $P_T$ for the checker to measure all ``0'' (accept)
for all $T$ steps is
\begin{equation*}
P_T(\Delta_1,\dots,\Delta_T)=\prod_{i=1}^T(1-2\Delta_i+2\Delta_i^2).
\end{equation*}

\begin{lemma}
If the adversary changes $\Delta m$ bits of the codeword in $T$
steps, then the highest possible probability of the checker not
detecting the corruption is achieved if all bits get flipped in
one step. That is, for all $\Delta_i\geq 0$ with
$\Delta_1+\dots+\Delta_T=\Delta$ we have
$P_T(\Delta_1,\dots,\Delta_T)\leq P_1(\Delta)$. \label{Tone}
\end{lemma}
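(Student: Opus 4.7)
My plan is to induct on $T$, which reduces the claim to a two-variable ``merging'' inequality that I will dispatch by a direct algebraic identity. Let $f(x) := 1 - 2x + 2x^2$, so the statement to prove is $\prod_{i=1}^{T}f(\Delta_i) \le f(\Delta)$ with $\Delta = \Delta_1 + \cdots + \Delta_T$. The base case $T=1$ is immediate. For the inductive step, the hypothesis applied to $\Delta_1,\ldots,\Delta_{T-1}$ gives $\prod_{i=1}^{T-1} f(\Delta_i) \le f(\Delta_1 + \cdots + \Delta_{T-1})$; since $f \ge 0$, multiplying by $f(\Delta_T)$ reduces the task to showing $f(a)\,f(b) \le f(a+b)$ with $a := \Delta_1 + \cdots + \Delta_{T-1}$ and $b := \Delta_T$. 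The assumption that bits are never flipped back guarantees $\sum_i d_i \le m$, so $a + b \le 1$, and in particular $a, b \in [0,1]$.

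For the merging inequality, I would simply expand both sides. A direct computation gives the identity
\begin{equation*}
f(a+b) - f(a)f(b) \;=\; 4ab\bigl(a + b - ab\bigr) \;=\; 4ab\bigl(1 - (1-a)(1-b)\bigr).
\end{equation*}
Because $a, b \in [0,1]$, the factor $(1-a)(1-b)$ lies in $[0,1]$, so both $ab$ and $1 - (1-a)(1-b)$ are nonnegative; hence $f(a)f(b) \le f(a+b)$, closing the induction.

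There is no genuinely hard step here; the only subtle point is ensuring the domain constraint $a+b \le 1$ in the merging step, for which the ``no bit is flipped twice'' convention introduced in the problem setup is essential (without it, $f$ is not monotone on $[0,\infty)$ since its minimum is at $x=1/2$, and spurious cases would have to be handled). Conceptually the identity above expresses that $f$ is submultiplicative on $[0,1]$, so spreading the $\Delta m$ bit-flips across several adversarial steps can only decrease the acceptance probability $P_T$ — confirming the intuition that ``all at once'' is the adversary's best strategy.
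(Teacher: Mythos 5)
Your proposal is correct and takes essentially the same route as the paper: induction on $T$ driven by the two-variable merging inequality $f(a)f(b)\le f(a+b)$. Two minor differences worth noting: you peel off one variable per inductive step ($P_T = P_{T-1}\cdot f(\Delta_T)$) whereas the paper combines the cases $T-2$ and $T=2$, which is logically equivalent but a bit more roundabout; and your identity $f(a+b)-f(a)f(b)=4ab(a+b-ab)$ is stated correctly, whereas the paper's factorization $4\Delta_1\Delta_2(\Delta+\Delta_1\Delta_2)$ has a sign slip (it should be $\Delta-\Delta_1\Delta_2$, i.e.\ $a+b-ab$). You also make explicit the domain constraint $a+b\le 1$ that justifies the nonnegativity of $a+b-ab$, which the paper leaves implicit; that care is warranted, since $f$ is not submultiplicative on all of $[0,\infty)$.
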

\begin{proof}
We prove this lemma by induction on $T$.

First, we prove that $P_2(\Delta_1,\Delta_2)\leq
P_1(\Delta_1+\Delta_2)$. We have
\begin{equation*}
P_1(\Delta_1+\Delta_2)=P_1(\Delta)=1-2\Delta+2\Delta^2
\end{equation*}
and
\begin{align*}
P_2(\Delta_1,\Delta_2) & =
P_2(\Delta_1,\Delta-\Delta_1)\\
&
=\big(1-2\Delta_1+2\Delta_1^2\big)\big(1-2(\Delta-\Delta_1)+2(\Delta-\Delta_1)^2\big)
\end{align*}
Therefore,
\begin{equation*}
P_1(\Delta_1+\Delta_2)-P_2(\Delta_1,\Delta_2)
=4\Delta_1(\Delta-\Delta_1)(\Delta+\Delta_1(\Delta-\Delta_1)) \geq
0
\end{equation*}
The last inequality holds because $0\leq\Delta_1\leq \Delta$.

Assuming the lemma holds for all $T=k-1$, let us examine $T=k$.
\begin{equation*}
P_k(\Delta_1,\dots,\Delta_k)=\prod_{i=1}^{k}\big(1-2\Delta_i+2\Delta_i^2\big)
\end{equation*}
By definition and the induction hypothesis for $T=2$ and $T=k-2$,
\begin{align*}
P_k(\Delta_1,\dots,\Delta_k)
&=P_{k-2}(\Delta_1,\dots,\Delta_{k-2})\cdot P_2(\Delta_{k-1},\Delta_k)\\
&\leq P_{1}(\Delta_1+\cdots+\Delta_{k-2})\cdot
P_1(\Delta_{k-1}+\Delta_k)\\
&\leq P_1(\Delta_1+\cdots+\Delta_k)=P_1(\Delta)
\end{align*}
Therefore, Lemma~\ref{Tone} holds for all $T\geq 1$.
\end{proof}
From this lemma it follows that the probability that the adversary
remains undetected is bounded by $P_T(\Delta_1,\dots,\Delta_T)\leq
P_1(\delta)=1-2\delta+2\delta^2$, with
$\Delta_1+\cdots+\Delta_T\geq \delta$.

The just derived probabilities are based on one copy of
$\ket{\psi_x}$ and $\ket{y}$. When we have $k$ copies, the
probability of measuring all zeros is not greater than
$(1-2\delta+2\delta^2)^k$. Therefore, if we pick $k\geq
\big\lceil\frac{\log\epsilon}{\log(1-2\delta+2\delta^2)}\big\rceil$,
the checker will output ``buggy'' with probability at
least $(1-\epsilon)$ if $\mathcal{M}$ is being corrupted.

Therefore, we can conclude that when we pick $k\geq
\big\lceil\frac{\log\epsilon}{\log(1-2\delta+2\delta^2)}\big\rceil$,
our quantum online memory checker works correctly.
Since $\delta$ and $\epsilon$ are predetermined constants, $k$ is
a constant as well. Therefore, the total complexity of this
checker is: space complexity $O(\log(1/\epsilon)\log n)$ and query
complexity $O(\log(1/\epsilon)\log n+\polylog n)$. This finishes the proof of
Theorem~\ref{maintheorem}.

Applying the same techniques as in~\cite{NaorRothblum:2009}, we have the
conclusion that our algorithm reaches the lower bound for quantum
online memory checking.

\section{Open Question}\label{sec:bounds}
The online memory checker in this article uses quantum mechanics both in its local memory and the communications with the
public memory. A variation of this model is a checker that stores
quantum information in its local memory, but communicates in
classical bits to the public memory.

In a simultaneous message protocol, if one message is quantum,
while the other is restricted to be classical, Regev and De Wolf
have shown that it requires a total of $\Omega(\sqrt{n/\log n})$
bits/qubits to compute the \textsc{Equality} function \cite{GavinskyRegevWolf:2008},
and hence such a hybrid setting is not significantly more
efficient than classical-classical protocols.  This result however
does not directly translate into a lower bound on the $s\times t$
complexity for quantum
 memory checking with classical communication.

Using the same techniques as in~\cite{NaorRothblum:2009}, a quantum online
memory checker with classical queries can be reduced to a modified
consecutive messages (CM) protocol. In this CM protocol, Alice is
allowed to send quantum messages to Carol and publish a quantum
public message, while Bob is restricted to classical messages. For
this CM protocol, there is an efficient solution as following:
Receiving an input $x$, Alice computes its quantum fingerprints
$\ket{\psi_x}$ and publish it as a public message; Bob, receiving
$y$, computes a quantum fingerprints $\ket{\psi_y}$ and compares
it with $\ket{\psi_x}$; Bob then sends Carol the result of the
Controlled-SWAP testing, who outputs the final result. The
communication complexity for this protocol is $O(\log n)$.

Due to the difference between the quantum-classical CM model and
SM protocol for \textsc{Equality} testing, it is not easy to draw
a conclusion for the lower bound of quantum online memory checking
with classical communications. Nevertheless we conjecture that
there is no efficient quantum online memory checker for this
setting.

\section{Conclusion}
In this paper, we consider the problem of constructing an online
memory checker. By using the quantum fingerprints, we reduce the
space complexity $s$ and query complexity $t$ from $s\times
t\in\Omega(n)$ to $s\in O(\log n)$ and $t\in O(\log n)$.

\end{document}